\definecolor{linkcolor}{rgb}{0.65,0,0}
\definecolor{citecolor}{rgb}{0,0.65,0}
\definecolor{urlcolor}{rgb}{0,0,0.65}
\theoremstyle{plain} 
\newtheorem{theorem}{Theorem} 
\newtheorem{property}{Property} 
\newtheorem{proposition}[theorem]{Proposition}
\newtheorem{corollary}[theorem]{Corollary}
\theoremstyle{definition}
\newtheorem{definition}[theorem]{Definition}
\theoremstyle{remark}
\numberwithin{theorem}{section}
\newcommand{\TITLE}{Algebraic aspects of solving Ring-LWE, including ring-based improvements in the Blum-Kalai-Wasserman algorithm}
\newcommand{\DATE}{\today}
\newcommand{\CC}{\mathbb{C}}
\newcommand{\FF}{\mathbb{F}}
\newcommand{\QQ}{\mathbb{Q}}
\newcommand{\RR}{\mathbb{R}}
\newcommand{\ZZ}{\mathbb{Z}}
\newcommand{\MOD}[1]{~(\textup{mod}~#1)}
\renewcommand{\pmod}{\MOD}
\newcommand{\ord}{\operatorname{ord}}
\title{\TITLE}
\author{Katherine E. Stange}
\date{\DATE}
\address{%
Department of Mathematics, University of Colorado,
Campux Box 395, Boulder, Colorado 80309-0395}
\email{kstange@math.colorado.edu}
\keywords{Ring learning with errors, Ring-LWE, Blum-Kalai-Wasserman, post-quantum cryptography, cyclotomic field}
\subjclass[2010]{Primary: 94A60, 11T71, 11R18}
\thanks{This research was supported by NSF-CAREER CNS-1652238 and NSF EAGER DMS-1643552.}
\newcommand{\Fq}{\mathbb{F}_q}
\newcommand{\Fqk}{\mathbb{F}_{q^k}}
  \newcommand{\Rq}{R_q}
 \newcommand{\Sq}{S_q}
\begin{document}

\maketitle

\begin{abstract}
We provide a reduction of the Ring-LWE problem to Ring-LWE problems in subrings, in the presence of samples of a restricted form (i.e.\ $(a,b)$ such that $a$ is restricted to a multiplicative coset of the subring).  To create and exploit such restricted samples, we propose Ring-BKW, a version of the Blum-Kalai-Wasserman algorithm which respects the ring structure.  Off-the-shelf BKW dimension reduction (including coded-BKW and sieving) can be used for the reduction phase.  Its primary advantage is that there is no need for back-substitution, and the solving/hypothesis-testing phase can be parallelized.  We also present a method to exploit symmetry to reduce table sizes, samples needed, and runtime during the reduction phase.  The results apply to two-power cyclotomic Ring-LWE with parameters proposed for practical use (including all splitting types).
\end{abstract}

\section{Introduction}

Ring Learning with Errors (Ring-LWE) \cite{lpr10} \cite{lpr13}, and Learning with Errors (LWE) \cite{reg05} more generally, are leading candidates for post-quantum cryptography.  The cryptographic hard problem (\emph{Search Ring-LWE}) is formally similar to discrete logarithm problems, so that protocols can be transferred from the latter context to the former.  But it also allows for new applications, such as homomorphic encryption \cite{rlwe-homo}.  Ring-LWE is also fortunate in having security reductions from other lattice problems. 

Ring-LWE is distinguished from Learning with Errors (LWE) by the use of lattices from number fields.  This injection of number-theoretical structure leads to performance improvements, but may add vulnerabilities.  So far, the number-theoretical structure has been only weakly exploited for attacks.  The ring structure plays a role in security when the error distribution is skewed \cite{civ} \cite{siam} \cite{cls-galois} \cite{eisentrager2014weak} \cite{elos2015weak}, or the secret is chosen from a subring or other ring-related non-uniform distribution \cite{order-lwe}.  In the related NTRU cryptosystem, the norm and trace maps to subfields play a role in attacks \cite{AlbrechtBaiDucas, CheonJeongLee, GentrySzydlo, KirchnerFouque}. 

However, the best known attacks on Ring-LWE parameters suggested for implementation are still generic attacks for LWE, e.g.\ \cite{newhope}.  The Blum-Kalai-Wasserman (BKW) algorithm is one such attack, which proceeds (in the first phase) combinatorially to create new samples in a linear subspace of the original problem, while controlling error expansion \cite{bkw}.  BKW has the drawback of requiring exponentially many samples, unless sample amplification is used \cite{1222}.  Nevertheless, its performance has been of significant interest:  for analysis and recent improvements, see \cite{albrecht-bkw} \cite{better} \cite{sieve-bkw} \cite{asympt} \cite{coded-bkw} \cite{improved-bkw}.  (Note that sample amplification does not immediately transfer from the LWE to the Ring-LWE setting, at least if one wishes the amplified samples to have Ring-LWE, and not just underlying LWE, format; the analogue would be the `sample rotation' described below.)

This paper focuses on two-power-cyclotomic unital (but equivalently, dual \cite{challenges} \cite{hownotto}) Search Ring-LWE, with no restriction on the splitting behaviour of the prime $q$.  The core of the paper is a reduction from higher-dimensional Ring-LWE problems with samples of a restricted form, to lower-dimensional Ring-LWE problems with the same error width, which is given in Theorem \ref{thm:reduction}.  The restricted form is as follows:  samples $(a,b)$ such that $a$ lies in a cyclotomic subring, or a fixed multiplicative coset of such a subring.  In the context of these theorems, it is natural to ask about creating samples of this restricted form using a ring variant of the Blum-Kalai-Wasserman algorithm.

One thus obtains a \emph{Ring-BKW} algorithm, which uses the reduction phase of BKW, including all known speedups, to reduce the Ring-LWE problem to a subring.  Then, the symmetry of the ring structure allows us to engineer an entire suite of subring problems in polynomially more time, whose solutions collectively solve the original Ring-LWE problem, again in polynomial time.  \emph{Thus, the `hypothesis testing' phase of BKW is parallelized, and the exponential `back-substitution' phase is eliminated (Theorem \ref{thm:reduction}).}  State-of-the-art off-the-shelf code for the BKW reduction phase and hypothesis testing phase may be used.  Note that the reduction phase of BKW is the dominant phase for runtime, and hypothesis testing is typically polynomial, but the now-eliminated back-substitution phase runs in time which is also exponential, but differs only by a smaller polynomial factor from the reduction phase; hence the overall runtime savings is a polynomial factor.  In Section \ref{sec:ring-bkw-alg}, we describe the Ring-BKW algorithm.  

The paper also addresses the use of symmetry to reduce the table sizes in BKW, here termed \emph{advanced keying} in Section \ref{sec:advanced-keying}.  Compared to a BKW reduction phase completely blind to the ring structure, this reduces the table size and samples needed by a factor of the block size, as well as reducing runtime, but requires that block sizes be taken to be a (possibly varying) power of $2$. 

We also discuss a square-root speedup over exhaustive search (which may be used, for example, in hypothesis testing); see Corollary \ref{corollary:square-root}.

See Section \ref{sec:relevance} for more discussion of practical runtime.

The key theoretical properties which are potentially advantageous (to an attacker) of Ring-LWE vs. plain LWE, are:

\begin{enumerate}
  \item Ring homomorphisms into smaller instances of the problem (the main tool of \cite{siam} \cite{cls-galois} \cite{eisentrager2014weak} \cite{elos2015weak}).
  \item The ability to \emph{rotate} samples, e.g.\ replacing $(a,b)$ with $(\zeta a, \zeta b)$ or $(a, \zeta b)$, which are different but related Ring-LWE samples (see notation in Section \ref{sec:setup}); these represent symmetries of the lattice (previously used in lattice sieving \cite{880} \cite{Schneider-Sieve}; more generally, manipulation of samples by multiplication was exploited in \cite{Bunny}).
  \item The existence of subrings as linear subspaces (which is important in \cite{order-lwe}).
  \item More generally, the multiplicative structure of certain linear subspaces.
  \item In the case of 2-power cyclotomics, the orthogonality of the lattice of the ring of integers and the orthogonal nature of the trace.
\end{enumerate}

For us, all five of these attributes play an important role.  It is a secondary purpose of this paper to lay out these advantages in a clear manner, to facilitate future analysis of the security of ring aspects of Ring-LWE.  See Section \ref{sec:advantages}.

Finally, it is also a secondary purpose of this paper to provide a treatment of the Ring-LWE problem which is inviting to the mathematical community.  

Code demonstrating the correctness of the algorithm is available at:

\url{https://math.katestange.net/code/ring-bkw/}.

\subsection*{Acknowledgements}  First, I would like to thank the anonymous referees on an earlier draft of this paper, who pointed out an important simplification.  Second, I would like to thank my mother, Ursula Stange, and my husband, Jonathan Wise, without whose childcare help in the face of snowstorms, viruses, cancellations and fender-benders, this paper simply would not have been completed.  To mathematician moms (and dads) everywhere:  take heart.

\section{Background and Setup for Ring-LWE}
\label{sec:setup}

It is typical to set notation for Ring-LWE as in, for example, \cite{order-lwe}; here we briefly review this notation in our context, and define the Ring-LWE problems.

\subsection{Number field $K$ and ring $R$}
Let $K$ be a number field over the rationals, of degree $n$. 
Then $K$ is equipped with a bilinear form given by a modification of the trace pairing,
\begin{equation}
  \label{eqn:trace-pairing}
  \langle \alpha, \beta \rangle = 
  \sum_{\sigma \RR} \sigma(\alpha\beta)
  +
   \frac{1}{2} \sum_{\sigma \CC} \operatorname{Re}(\sigma(\alpha)\overline{\sigma(\beta)}).
\end{equation}
Here the sums are over real and complex embeddings, respectively (note that including both elements of each pair of conjugate complex embeddings necessitates the factor of $\frac{1}{2}$).  This gives an isomorphism of $K_\RR := \RR \otimes_{\QQ} K$ with $\RR^n$, taking the pairing above to the standard inner product, and \eqref{eqn:trace-pairing} is chosen in such a way that the isomophism is exactly that arising from the \emph{Minkowski} or \emph{canonical} embedding of algebraic number theory.  We can also denote the norm by $||\mathbf{x}|| = \sqrt{\langle \mathbf{x}, \mathbf{x} \rangle}$.  

The ring of integers $R$ of $K$ forms a lattice in $K_\RR$.

\subsection{Gaussian distribution}

Having geometry (in particular a norm $|| \cdot ||^2$) on $K_\RR$ allows us to define Gaussian distributions.  For a \emph{Gaussian parameter} $r > 0$, we write
\[
\rho_r : K_\RR \rightarrow (0,1], \quad 
\rho_r(\mathbf{x}) = \exp( -\pi ||\mathbf{x}||^2 / r^2 ).
\]
Normalizing this to obtain a probability distribution function $r^{-n} \rho_r$, we obtain the \emph{continuous Gaussian probability distribution of width $r$} on $K_\RR$, denoted $D_r$.  

Note that, when considered with respect to an orthonormal basis, such a distribution is the sum of independent distributions in each coordinate, each having width $r$.  In this paper, we are concerned exclusively with this case.

With this normalization, the variance is $r^2/2\pi$, and one standard deviation is $r/\sqrt{2\pi}$.  It is a sum of independent Gaussians in each coordinate for which the range $[-r,r]$ corresponds to $\sqrt{\pi/2} \sim 1.25\ldots$ standard deviations.  

In practice, the tails of the Gaussian may be cut off, so that the number of possible values in each coordinate is finite.  

One may discretize a Gaussian distribution to obtain a distribution $\mathcal{D}_r$ on a lattice $\mathcal{L} \subset K_\RR$.  That is, one takes
\[
  \rho_r(\mathcal{L}) = \sum_{\lambda \in \mathcal{L}} \rho_r(\lambda)
\]
and one samples element $\lambda \in \mathcal{L}$ with probability
\[
  \frac{ \rho_r(\lambda) }{ \rho_r(\mathcal{L}) }.
\]
If $\mathcal{L}$ has an orthonormal basis, then again this distribution consists of independent distributions on the coefficients of the basis.

    \subsection{Prime $q$ and quotient ring $R_q$}

    Let $qR$ be the ideal generated by $q$ in $R$. The fundamental setting of the Ring-LWE problem is the ring $R_q := R/qR$.
    
    Letting $q = \mathfrak{q}_1^{e_1} \cdots \mathfrak{q}_g^{e_g}$ be the unique decomposition of $q$ into distinct prime ideals $\mathfrak{q}_i$ in $R$, the Chinese remainder theorem gives
\[
  R_q \cong \bigoplus_{i=1}^g R/\mathfrak{q}_i^{e_i}.
\]
If $q$ is unramified (which is typically the case), then $e_i = 1$ for all $i$.  If $K$ is Galois (also typically the case in the cryptographic setting), then the Galois group acts transitively on the $\mathfrak{q}_i$ and they all have the same residue degree (the residue degree is the dimension of the quotient field $R/\mathfrak{q}_i$ as an $\FF_q$-vector space).

    \subsection{Ring-LWE distributions}

    For any $s \in R_q$ (the \emph{secret}), and any distribution $\psi$ over $R_q$ (the \emph{error distribution}), we write $A_{s,\psi}$ for the associated \emph{Ring-LWE distribution for secret $s$} over $R_q \times R_q$, given by sampling $a$ uniformly over $R_q$, sampling $e$ from $\psi$, and outputting $(a, b := as + e)$.  

    Such outputs $(a,b)$ are called \emph{samples}, and in a crytographic application, these are observed publicly, while the secret is not meant to be exposed.

For the error distribution, we wish to define a `small' distribution on $R_q$, i.e.\ concentrated near the origin (in comparison to $q$, which is large).  It is typical to choose for the error distribution a discretized Gaussian distribution as described above (considered post factum modulo $qR$).
    This is the context in which security reductions apply.  In implementations, it is sometimes suggested to approximate this by a uniform distribution on a box around the origin, etc.
    
    \subsection{Ring-LWE problems}

    The two fundamental Ring-LWE problems are (a) \emph{search}: to compute the secret, upon observing sufficiently many samples; or (b) \emph{decision}:  to determine if the samples are hiding a secret at all, as opposed to being random noise.  We state them more formally as follows.

    \begin{definition} The \emph{search Ring-LWE problem}, for error distribution $\psi$ and secret distribution $\varphi$, is as follows:  Given an error distribution $\psi$ over $R_q$ and a secret distribution $\varphi$ over $R_q$, and some number of samples drawn from the distribution $A_{s,\psi}$ for some fixed $s$ drawn from $\varphi$, compute $s$.
    \end{definition}

    \begin{definition} The \emph{decisional Ring-LWE problem}, for error distribution $\psi$ and secret distribution $\varphi$, is as follows: Given an error distribution $\psi$ over $R_q$ and a secret distribution $\varphi$ over $R_q$, distinguish with non-negligible advantage, between
      \begin{enumerate}
	\item samples drawn from the distribution $A_{s,\psi}$ for some fixed $s$ drawn from $\varphi$; and 
	\item samples drawn uniformly from $R_q \times R_q$.
      \end{enumerate}
    \end{definition}

    We remark that Ring-LWE is frequently defined in the context of the dual $R^\vee$ (the inverse of the different ideal).  However, in the case that $K$ is a $2^N$-th cyclotomic field, $R \cong 2^{N-1}R^\vee$ and this isomorphism is realized as a scaling in the canonical embedding, and thus preserves the error distribution up to scaling, so we can interchange the dual version with the simpler `unital' version considered here \cite{lpr10}.

    Search-to-decision reductions are known in a variety of contexts \cite{lpr10}.  This paper concerns both problems, but especially the search problem.

    The Ring-LWE problem is formally similar to the discrete logarithm problem, which could be phrased in terms of \emph{samples} $(a, a^s)$ in a finite field:  given $(a, a^s)$, find $s$.  In the ring $R_q$, solving for $s$ given $(a, as)$ can be accomplished using linear algebra (Gaussian elimination), or by multiplication by $a^{-1}$ in the ring.  By introducing a small error $e$, so we have $(a, as+e)$, multiplication by $a^{-1}$ is no longer helpful, and Gaussian elimination becomes useless, as it amplifies the errors to the point of washing out all useful information.  From another perspective, the security stems from the fact that addition of an error value is somehow unpredictably mixing with respect to the multiplicative structure.

    Another consequence of this setup is that given just one sample $(a, b)$, one has as many solutions $s$ to $b = as + e$ as there are possible values for $e$.  In fact, the problem only has a unique solution once we have enough samples.  If the samples are not Ring-LWE samples at all, then with sufficiently many samples, it becomes overwhelmingly likely that there are no values of $s$ so that $b_i - a_is$ is in the support of the error distribution for all samples $s$.  If the samples are Ring-LWE, this is the point at which the true secret is the only solution, with overwhelming probability.

    \section{Specializing to $2$-power cyclotomic Ring-LWE}
    \label{sec:notation}

 We will now specialize to the $2$-power cyclotomic case, fixing values for the variables
 \[
   K, R, q, R_q, n 
 \]
 from the last section, and defining 
 \[
   m, \zeta:= \zeta_m, \chi, \chi_0, E_{\chi_0}
 \]
 for the $2$-power cyclotomic case.  Whenever we say refer to \emph{$2$-power cyclotomic Ring-LWE}, we refer to all the conventions in this section.

 \subsection{Ring $R$}
 We let $K$ and $R$ be the $2n$-th cyclotomic field and ring of integers, respectively, where $n$ is a power of two.  This is of dimension $n$ (note that $\varphi(2n)=n$), and can be presented as
 \[
   R = \ZZ[\zeta_{2n}] = \ZZ[x]/(x^n+1).
 \]
 We will use the notation $m = 2n$ and $\zeta_m$ for a primitive $m$-th root of unity in $R$ \emph{and for its image in quotients of this ring}.  

 \subsection{The $\zeta$-basis for $R$ and its quotients}
 
 A basis for $R$ is
 \[
   1, \zeta_m, \zeta_m^2, \ldots, \zeta_m^{n-1}.
 \]
 This will be called the \emph{$\zeta$-basis}.  We have the relation $\zeta_m^n + 1 = 0$ in $R$ and in all its quotients (this is the $2n$-th cyclotomic polynomial evaluated at $\zeta_m$), but the minimal polynomial for $\zeta_m$ varies in these quotients, and may be a proper divisor of this cyclotomic polynomial.  Nevertheless, in all quotients of $R$, we still obtain a $\zeta$-basis, i.e.\ a power basis in terms of $\zeta := \zeta_m$.

 \subsection{Prime $q$}
 Let $q$ be an odd prime, unramified in $R$.

 \subsection{Ring $R_q$ and further quotients}
 We consider the quotient ring
 \[
   R_q = R/qR \cong (\ZZ/q\ZZ)[x]/(x^n+1), 
 \]
 which is an $\Fq$-vector space of dimension $n$.  We may use the same $\zeta$-basis for this ring (to be explicit, the images of the $\zeta$ basis for $R$ under the reduction modulo $q$).
 
 We may also consider further quotients $R/\mathfrak{a}$ for $\mathfrak{a} \mid qR$.  We may also use a $\zeta$-basis for these rings, although it may be of lower dimension over $\Fq$ (so fewer powers required).  We have
   \[
     R/\mathfrak{a} \cong \Fq[x]/(g(x))
   \]
   where $g(x) \mid x^n+1$.  In particular, identifying $\zeta \in R/qR$ with its image in $R/\mathfrak{a}$, the latter has an $\Fq$-basis $1, \zeta, \zeta^2, \ldots, \zeta^{\deg(g)-1}$.

   \subsection{Error distribution $\chi$, coefficient distribution $\chi_0$ and coefficient support $E_{\chi_0}$}
   \label{sec:error}

   We will denote the error distribution by $\chi$.  If this error distribution is formed using independent identically distributed coefficients on the $\zeta$-basis, with coefficient distribution $\chi_0$ supported on a subset $E_{\chi_0} \subseteq \Fq$, then we say that $\chi$ is \emph{formed on the $\zeta$-basis with coefficients distributed according to $\chi_0$}.  This is true, for example, of a discrete Gaussian distribution on two-power cyclotomics, or a distribution formed by choosing coefficients uniformly from some subset of $\Fq$.  For the former observation, the relevant fact is the following:  \emph{the power basis associated to $\zeta_m$ is orthonormal (after scaling) in the canonical embedding.}  To see this, use \eqref{eqn:trace-pairing} and observe that if $\zeta_m^a$ has order $2^\ell \ge 2$, then $-\overline{\zeta_m}^a$ does also, hence the real parts of the complex embeddings of roots of unity form a collection symmetrical about zero.  For this paper, we will concern ourselves exclusively with this case.
 
 \subsection{Secret distribution}

 We will not make any particular assumption on the secret distribution.  It may be taken to be uniform on $R_q$.  Note, however, that the method of \cite[Section 3, Targeting $e_i$]{Bunny} could be used to manipulate the samples so the secret can be taken from the error distribution, preserving the Ring-LWE structure of the samples.

 \section{Key theoretical properties}
 \label{sec:advantages}

 In this section we highlight several key aspects of Ring-LWE absent in LWE.

 \subsection{Ring homomorphisms}

  If a Ring-LWE problem is presented in $R_q$, then for any $\mathfrak{a} \mid qR$, we have a ring homomorphism
  \[
    \rho:  R_q \rightarrow R/\mathfrak{a}.
  \]
  This transports samples distributed according to $A_{s,\chi}$ to samples distributed according to $A_{\rho(s), \rho(\chi)}$.

  In general, the effect of $\rho$ on $\chi$ is problematic, i.e.\ it spreads out the error widely.  As an illustration, we give a proposition governing the behaviour of $\rho$ on $\chi$ in the $2$-power cyclotomic case, when $q \equiv 1 \pmod 4$.

  \begin{proposition}
    \label{prop:transport}
    Suppose we are in the $2$-power cyclotomic case, and $R/\mathfrak{a} \cong \Fqk$, and $q \equiv 1 \pmod 4$.  
  If, in $R_q$, the error distribution $\chi$ is formed on the $\zeta$-basis in $R_q$ with coefficients drawn from $\chi_0$ on $\Fq$, then $\chi' := \rho(\chi)$ is formed on the $\zeta$-basis in $\Fqk$ with coefficients drawn from $\chi_0'$ on $\Fq$, where $\rho(\zeta_m^k) \in \Fq$ and
  \[
  \chi_0' = \sum_{i=0}^{n/k-1} \rho(\zeta_m^{k})^i \chi_0.
  \]
\end{proposition}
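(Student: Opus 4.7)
The plan is to expand a sample from $\chi$ in the $\zeta$-basis of $R_q$, push it through $\rho$, and regroup the resulting sum into blocks of length $k$ so as to recognize the $\zeta$-basis expansion in $\mathbb{F}_{q^k}$. Concretely, I would write a sample as $e = \sum_{i=0}^{n-1} c_i \zeta_m^i$ with i.i.d.\ $c_i \sim \chi_0$ and use that $\rho$ acts as the identity on $\mathbb{F}_q$ to obtain $\rho(e) = \sum_{i=0}^{n-1} c_i\,\rho(\zeta_m)^i$ in $\mathbb{F}_{q^k}$. Since $k$ divides $n = \varphi(m)$ (being the multiplicative order of $q$ modulo $m$), I would then reindex $i = jk+r$ with $0 \le r < k$ and $0 \le j < n/k$ to get
\[
\rho(e) = \sum_{r=0}^{k-1} \rho(\zeta_m)^r \Bigl(\sum_{j=0}^{n/k-1} c_{jk+r}\,\rho(\zeta_m^k)^j\Bigr).
\]

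The main obstacle, and essentially the only nontrivial number-theoretic content, is justifying that $\rho(\zeta_m^k) \in \mathbb{F}_q$; without this the inner sums need not be scalars in $\mathbb{F}_q$ and the advertised $\zeta$-basis expansion does not materialize. I would verify it via the $2$-adic lifting-the-exponent lemma: because $q \equiv 1 \pmod 4$, one has $4 \mid q-1$, so $v_2(q^t - 1) = v_2(q-1) + v_2(t)$ for every $t \ge 1$. Writing $m = 2^{N+1}$, minimality of the order $k$ of $q$ modulo $m$ forces $v_2(k) = \max(0,\, N+1-v_2(q-1))$, and hence $v_2(k(q-1)) \ge N+1$, i.e.\ $m \mid k(q-1)$. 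This gives $\rho(\zeta_m^k)^{q-1} = \rho(\zeta_m^{k(q-1)}) = 1$ in $\mathbb{F}_{q^k}^\times$, placing $\rho(\zeta_m^k)$ in $\mathbb{F}_q$ as required.

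With this in hand, each inner sum in the displayed expression lies in $\mathbb{F}_q$ and is distributed as $\sum_{j=0}^{n/k-1} \rho(\zeta_m^k)^j \chi_0$, matching the stated $\chi_0'$; and the $k$ inner sums involve disjoint subsets of the independent $c_i$, so they are mutually independent. Recognizing $1, \rho(\zeta_m), \ldots, \rho(\zeta_m)^{k-1}$ as the $\zeta$-basis of $\mathbb{F}_{q^k}$ over $\mathbb{F}_q$ (per Section~\ref{sec:notation}) completes the argument: $\rho(e)$ is then visibly expressed in the $\zeta$-basis with i.i.d.\ coefficients from $\chi_0'$, which is exactly the claim.
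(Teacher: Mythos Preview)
Your proof is correct and follows essentially the same approach as the paper: both arguments regroup the $\zeta$-basis expansion via the index substitution $i = jk + r$ and reduce the claim to showing $\rho(\zeta_m^k) \in \mathbb{F}_q$, which you establish via the $2$-adic lifting-the-exponent lemma while the paper proves the equivalent statement $\operatorname{ord}_2(q^{2^i}-1) = \operatorname{ord}_2(q-1) + i$ directly by induction. Your presentation is slightly more explicit about the independence of the resulting coefficients, but the substance is the same.
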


\begin{proof}
  Define $r = \operatorname{ord}_2(q-1)$, meaning that $2^r \mid q-1$ but $2^{r+1} \nmid q-1$.  Since $q \equiv 1 \pmod 4$, we have $r \ge 2$.  Furthermore, $q^i + 1 \equiv 2 \pmod 4$ for all $i$, so that $\operatorname{ord}_2(q^2-1) = \operatorname{ord}_2((q-1)(q+1)) = r+1$ and, by induction
  \[
    \operatorname{ord}_2(q^{2^i}-1) = r+i
  \]
  for all $i \ge 1$.  As $k$ is defined as the embedding degree of the $2n$-th roots of unity, we obtain $k = \frac{2n}{2^r}$.

  The element $\rho(\zeta_m^k)$ satisfies $\rho(\zeta_m^k)^{2n/k} = 1$ in $R/\mathfrak{a}$.  Hence it is itself a primitive $2n/k$-th root of unity, i.e.\ $2^r$-th root of unity.  Hence $\rho(\zeta_m^k) \in \Fq$ by the definition of $r$.

  The main statement now follows from the fact that $1, \zeta_m, \ldots, \zeta_m^{k-1}$ is an $\Fq$-basis of $\Fqk$, that $\rho(\zeta_m^k) \in \Fq$ and that for $0 \le j < k$ and $0 \le i < n/k$, we have
  \[
    \rho(\zeta_m^{ik+j}) 
    = \rho(\zeta_m^{k})^i\rho(\zeta_m^j) 
    = \rho(\zeta_m^{k})^i \zeta_m^j.
  \]
\end{proof}

For example, in the case that $k = n/2$, we obtain
\[
  \chi_0' = \chi_0 + \rho(\zeta_m^k) \chi_0.
\]
This means the coefficients of $\chi'$ are chosen from a sum of two Gaussian distributions with different coefficients.  This is less controlled than twice a single Gaussian.  For, twice a Gaussian is simply a wider Gaussian, and the size of its support grows by approximately $\sqrt{2}$.  However, in an uneven linear combination the size of the support $E_{\chi'}$ is approximately the square of the size of $E_\chi$.  (To be explicit, since $\chi_0$ is discrete, $c\chi_0$ is ``spaced out'' into isolated spikes, and each spike of support is transformed into a small gaussian by the addition of $\chi_0$ to form $c \chi_0 + \chi_0$.)  This is a symptom of the protective property of these ring homomorphisms:  they transform the error to something less amenable to attack.  In fact, very quickly the image of a Gaussian error approaches uniform in the image ring as the dimension of the image ring decreases.  And Ring-LWE samples with uniform error are informationless.

 \subsection{Rotating samples}

 The ring structure allows us to generate new (but not independent) samples from old.  

 \begin{proposition}
   Suppose $\chi$ is invariant under multiplication by $\zeta$.  Then if $(a,b)$ is distributed according to $A_{s,\chi}$, then
   \begin{enumerate}
 \item $(\zeta a, \zeta b)$ is also distributed according to $A_{s,\chi}$,
 \item $(a, \zeta b)$ is distributed according to $A_{\zeta s, \chi}$.
   \end{enumerate}
 \end{proposition}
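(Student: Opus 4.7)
The plan is to verify each claim directly from the definition of the Ring-LWE distribution $A_{s,\chi}$, using the fact that $\zeta$ is a unit in $R$ (hence in $R_q$), so multiplication by $\zeta$ is a bijection on $R_q$ preserving uniformity.

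For part (1), I would start from a sample $(a,b) = (a, as+e)$ with $a$ uniform on $R_q$ and $e$ drawn from $\chi$, independently. Then
\[
(\zeta a, \zeta b) = (\zeta a,\ (\zeta a)\cdot s + \zeta e).
\]
Since $\zeta$ is a unit (with inverse $-\zeta^{n-1}$), the map $a \mapsto \zeta a$ is a measure-preserving bijection of $R_q$, so $\zeta a$ remains uniform. Independence of $\zeta a$ and $\zeta e$ follows from independence of $a$ and $e$, since each is a deterministic function of a single input. The hypothesis that $\chi$ is invariant under multiplication by $\zeta$ gives $\zeta e \sim \chi$. Thus $(\zeta a, \zeta b)$ has the form (uniform element, uniform times $s$ plus error from $\chi$), i.e. it is distributed as $A_{s,\chi}$.

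For part (2), I would rewrite
\[
(a, \zeta b) = (a,\ \zeta(as + e)) = (a,\ a(\zeta s) + \zeta e).
\]
Here $a$ is uniform on $R_q$ by assumption, $\zeta e \sim \chi$ again by the invariance hypothesis, and $a, \zeta e$ are independent. This exactly matches the sampling procedure for $A_{\zeta s, \chi}$ with secret $\zeta s \in R_q$.

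I expect no serious obstacle: the proof is essentially a substitution, and the only ingredients are (a) that $\zeta$ is a unit in $R_q$, which is immediate from $\zeta^n = -1$, and (b) the stated invariance of $\chi$ under multiplication by $\zeta$. The one subtlety worth flagging is that in part (1) the \emph{joint} distribution must match, not merely the marginals; this is handled by noting that the transformation $(a,e) \mapsto (\zeta a, \zeta e)$ preserves independence and preserves each marginal distribution, hence preserves the joint distribution of the underlying randomness, and $b$ is then a deterministic function of $(a,e)$ and $s$.
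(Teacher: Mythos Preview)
Your proposal is correct; it is precisely the direct verification from the definition of $A_{s,\chi}$ that the paper has in mind (the paper states the proposition without proof, treating it as immediate). Your explicit attention to the joint distribution in part (1) and to the unit property of $\zeta$ is exactly the right level of care.
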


 In particular, in the $2$-power cyclotomic case, a discrete Gaussian is invariant under multiplication by $\zeta_m$ and all its powers.

 We call these \emph{rotated samples}.  One could also rotate by other small values, e.g.\ $1 + \zeta_m$ in the $2$-power cyclotomic case, at a small cost in changing the error distribution.  (This may allow for adapting the notion of sample amplification to the Ring-LWE case; see \cite{1222}.)

 \subsection{Subrings and trace maps}
 \label{sec:trace}

 If considering Ring-LWE in $R_q$, where $R$ is the ring of integers of a number field $K$, then any subfield $L \subseteq K$ gives rise to a subring $S \subseteq R$ (i.e., the ring of integers of $L$) and, modulo $q$, to a subring $S_q \subseteq R_q$.  Then $S_q$ is an $\Fq$-vector subspace of $R_q$, and $R_q$ has a module structure over $S_q$.  The dimensions of $K$ over $L$, $R$ over $S$ and $R_q$ over $S_q$ agree. 

 There is a linear map $T := \operatorname{Tr}^{R_q}_{S_q} : R_q \rightarrow S_q$ satisfying the following relationship to the usual trace map from $R$ to $S$:
 \begin{equation*}
   \operatorname{Tr}^R_S(x) \bmod{qS} = \operatorname{Tr}^{R_q}_{S_q} ( x \bmod{qR} ).
 \end{equation*}
 To see this, remark that $qS$ is elementwise fixed by the Galois group of $K/L$ and $qR$ is the extension of $qS$ to $R$, so the Galois group takes $qR$ to itself.  Therefore the Galois group acts on $\Rq$ fixing $\Sq$.  Therefore we may define $\operatorname{Tr}^{R_q}_{S_q}(x)$ to be the sum of $\sigma(x)$ for $\sigma$ in the Galois group of $K/L$, and the relationship above holds.

The ring $R$ is always an $S$-module, but the reader is cautioned that in a general number field, $R$ may not be a \emph{free} module over $S$.

 \subsection{Multiplicative cosets of subrings}
 \label{sec:mulcoset}
  
 The set $a_0 S_q$, for any invertible $a_0 \in R_q$, is an $\Fq$-vector subspace of $R_q$ of dimension equal to the dimension of $S_q$.  Distinct such subspaces intersect only at subspaces consisting of non-invertible elements of $R_q$, and $R_q^*$ (the invertible elements of $R_q$) lie in the union of all such subspaces.

 Let us write $A_{a_0\Sq,s,\chi}$ for the distribution on $a_0 \Sq \times \Rq$ given by choosing $a$ uniformly in $a_0\Sq$, choosing $e$ according to error distribution $\chi$ and outputting $(a, b:=as+e)$.

 \begin{proposition}
   If $(a,b)$ is distributed according to $A_{a_0\Sq,s,\chi}$ where $\chi$ is invariant under multiplication by $\zeta$, then 
   \begin{enumerate}
     \item $(\zeta a, \zeta b)$ is distributed according to $A_{\zeta a_0\Sq, s, \chi}$, and
     \item $(a, \zeta b)$ is distributed according to $A_{a_0\Sq, \zeta s, \chi}$. 
   \end{enumerate}
 \end{proposition}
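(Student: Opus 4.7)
The plan is to verify, for each case, that the joint distribution of the transformed pair has the correct form: namely, that the first coordinate is uniform on the claimed coset, and that the second coordinate is the product of the first with the claimed secret, plus an error drawn from $\chi$, independent of the first coordinate.

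First, I would observe the key algebraic ingredient: because $\zeta$ is a unit in $R_q$, multiplication by $\zeta$ is a bijection from any $\mathbb{F}_q$-subspace to its image. In particular, if $a$ is uniformly distributed on $a_0 \Sq$, then $\zeta a$ is uniformly distributed on $\zeta a_0 \Sq$, whereas $a$ itself remains uniformly distributed on $a_0 \Sq$. This handles the first-coordinate marginal in each case.

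For part (1), I would write $\zeta b = \zeta(as+e) = (\zeta a) s + \zeta e$. By hypothesis $\chi$ is invariant under multiplication by $\zeta$, so $\zeta e$ has the same distribution $\chi$ as $e$, and it is independent of $\zeta a$ because $e$ is independent of $a$ and multiplication by $\zeta$ is deterministic. Hence the pair $(\zeta a, \zeta b)$ has the form (uniform on $\zeta a_0 \Sq$, times $s$, plus fresh $\chi$-error), which is exactly $A_{\zeta a_0 \Sq, s, \chi}$.

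For part (2), I would instead regroup as $\zeta b = \zeta a s + \zeta e = a(\zeta s) + \zeta e$, using commutativity of $R_q$. Again $\zeta e \sim \chi$ by invariance, independent of $a$, and $a$ remains uniform on $a_0 \Sq$. So $(a,\zeta b)$ matches $A_{a_0 \Sq, \zeta s, \chi}$. I do not anticipate a real obstacle here; the only subtlety worth flagging explicitly is the invariance hypothesis on $\chi$, which is precisely what makes the error of the transformed sample indistinguishable in distribution from a freshly drawn one, and the unit property of $\zeta$, which guarantees that uniformity on the coset is preserved.
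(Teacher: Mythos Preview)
Your proposal is correct and is the natural direct verification; the paper itself states this proposition without proof, treating it as immediate from the definitions and the $\zeta$-invariance of $\chi$, which is exactly the approach you take.
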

 
 The multiplicative coset structure gives rise to another type of sample reduction, beyond ring homomorphism.  We have
 \begin{proposition}
   \label{prop:new-trace-hom}
   Suppose $s \in \Rq$ is fixed.  Define $T := Tr^{\Rq}_{\Sq}$, the trace map described above. 
 Consider a collection of samples distributed according to $A_{a_0\Sq,s,\chi}$, where $a_0 \in \Rq^*$ is fixed and $T(a_0)$ is invertible.
 Then $T$ maps such samples to samples distributed according to $A_{s',T(\chi)}$ in $\Sq$, where
 \[
   s' = 
     \frac{T(a_0 s)}{ T(a_0)} .
   \]
\end{proposition}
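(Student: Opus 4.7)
The plan is to show that applying $T$ componentwise to a sample $(a,b) \sim A_{a_0 \Sq, s, \chi}$ produces a sample of the required form in $\Sq$. The crucial algebraic fact is that $T$ is $\Sq$-linear: elements of $\Sq$ are Galois-fixed, so they pass through the sum defining the trace.

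First I would write $a = a_0 t$ for a uniquely determined $t \in \Sq$, noting that the distribution on $a$ corresponds to $t$ being chosen uniformly at random in $\Sq$. By $\Sq$-linearity of $T$, we get $T(a) = t\, T(a_0)$. Since $T(a_0)$ is assumed invertible, multiplication by $T(a_0)$ is a bijection on $\Sq$, so $T(a)$ is uniformly distributed on $\Sq$. This handles the ``$a$-coordinate'' of the reduced sample.

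Next I would compute the ``$b$-coordinate'': $T(b) = T(as) + T(e) = T(a_0 t s) + T(e) = t\, T(a_0 s) + T(e)$, again using that $t \in \Sq$ is Galois-fixed and hence pulls out of the trace. Substituting $t = T(a)/T(a_0)$, one obtains
\[
  T(b) \;=\; T(a)\cdot \frac{T(a_0 s)}{T(a_0)} + T(e) \;=\; T(a)\cdot s' + T(e),
\]
with $s'$ as in the statement. Independence of $T(e)$ from $T(a)$ is inherited from independence of $e$ from $a$, and the law of $T(e)$ is by definition $T(\chi)$. Hence $(T(a), T(b)) \sim A_{s', T(\chi)}$ over $\Sq$.

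The only nontrivial step is really the first one: verifying that $T(a)$ is uniformly distributed on $\Sq$, which hinges on the hypothesis that $T(a_0)$ is invertible (without this, $T$ would collapse the coset to a proper subspace and the resulting ``$a$''-distribution would not be uniform on $\Sq$). The rest is a direct application of $\Sq$-linearity of the trace map established in Section~\ref{sec:trace}.
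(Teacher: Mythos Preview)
Your proof is correct and follows essentially the same route as the paper's: write $a = a_0 t$ with $t \in \Sq$, use $\Sq$-linearity of $T$ to obtain $T(a) = t\,T(a_0)$ and $T(as) = t\,T(a_0 s)$, and then rearrange. If anything, your version is slightly more complete, since you explicitly verify that $T(a)$ is uniform on $\Sq$ (via invertibility of $T(a_0)$) and note the independence of $T(e)$ from $T(a)$, points the paper leaves implicit.
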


 \begin{proof}
 For $a = a_0 a' \in a_0\Sq$, since $T$ is $\Sq$-linear, we have
 \[
   T(as) = a' T(a_0 s).
 \]
 This implies that 
 \[
   ( T(a), T(as+e) ) 
   =  \left( a' T(a_0), a'  T(a_0) \left( \frac{T(a_0 s)}{ T(a_0)} \right) + T(e) \right)
 \]
 This proves the proposition.
 \end{proof}

 \subsection{Trace maps for two-power cyclotomics}
 \label{sec:ortho}

 The final piece to the puzzle is the behaviour of the trace map $T$ in the previous section.  In the case of the $2$-power cyclotomics, the trace map is particularly well-behaved in terms of its effect on the error distribution.  In fact, it takes very many of the basis elements $\zeta_m$ to zero.  This is a feature of the orthogonality of the basis $1, \zeta_m, \ldots, \zeta_m^{n-1}$, and it may be proved with reference to basic algebraic number theory, as follows.

 Using the notation of Section \ref{sec:trace} in the case of the $2$-power $m$-th cyclotomics $K$, let $L$ be the $k$-th cyclotomic subfield.
 One may take $\zeta_k = \zeta_m^{m/k}$ and $S_q$ has a basis $1, \zeta_k, \ldots, \zeta_k^{k/2-1}$ over $\FF_q$.  We collect terms to write
 \begin{align*}
   R_q &= \ZZ + \zeta_m \ZZ + \cdots + \zeta_m^{m/2-1}\ZZ \\
   &= (\ZZ + \zeta_k\ZZ + \cdots + \zeta_k^{k/2-1}\ZZ) + \zeta_m( \ZZ + \zeta_k\ZZ + \cdots + \zeta_k^{k/2-1}\ZZ) \\
   & \quad\quad\quad\quad\quad + \cdots + \zeta_m^{m/k-1}( \ZZ + \zeta_k\ZZ + \cdots + \zeta_k^{k/2-1}\ZZ) \\
   &= \Sq + \zeta_m \Sq + \cdots + \zeta_m^{m/k-1} \Sq.
 \end{align*}
 In other words, $R_q$ has a $\zeta$-basis over $S_q$. 

 The elements of the Galois group of $K/L$ are given by $\zeta_m \mapsto \zeta_m^{a}$ for $a \in (\ZZ/m\ZZ)^*$ satisfying $a \equiv 1 \pmod k$, and so
 \begin{align*}
   \operatorname{Tr}^{R_q}_{S_q}(\zeta_m^i) &= \sum_{\substack{0 \le a < m \\ a \equiv 1 \pmod k}} \zeta_m^{ia} \\
   &= \zeta_m^i \sum_{a=0}^{m/k-1} \zeta_m^{iak} \\
   &= \left\{
     \begin{array}{ll}
       0 & i \not\equiv 0 \pmod{\frac{m}{k}} \\
       \frac{m}{k}\zeta_m^i & i \equiv 0 \pmod{\frac{m}{k}} \\
     \end{array}
     \right. .
 \end{align*}

 In particular, for the trace to the index two subfield, we have:
\begin{equation*}
     \frac{1}{2} T^{R_q}_{S_q}(\zeta_m^i) = \left\{
       \begin{array}{ll}
	 0 & i \equiv 1 \pmod 2 \\
	 \zeta_m^i & i \equiv 0 \pmod 2
       \end{array} \right. .
     \end{equation*}
     This special case can be seen directly by observing that if $i$ is even, then $\zeta_m^i \in S$, while if $i$ is odd, then $\zeta_m^i$ is the square root of something in $S$, i.e.\ it satisfies the minimal polynomial $x^2 - \zeta_m^{2i}$, and hence has trace zero.  An alternate proof of the general case then follows by application of the special case $\log_2(m/k)$ times.

 In summary then, the trace map preserves the error distribution up to small factors.  The following proposition, which is now immediate, makes this explicit.

 \begin{proposition}
   \label{prop:new-error-trace}
   Suppose we are in the two-power cyclotomic case as in Section \ref{sec:notation}, where in particular $R$ is the ring of integers of the $m$-th cyclotomics, with $m$ a power of two.  Let $S$ be the subring of integers of the $k$-th cyclotomics (hence $k$ is also a power of two).   Write $T := Tr^{\Rq}_{\Sq}$ for the trace map described in Section \ref{sec:trace}.  Suppose that $\chi$ is an error distribution formed on the $\zeta$-basis of $\Rq$ with coefficients chosen according to $\chi_0$.  Then $\frac{k}{m}T$ takes values in $\Sq$ and $\frac{k}{m}T(\chi)$ is the error distribution formed on the $\zeta$-basis of $\Sq$ with coefficients from $\chi_0$.
 \end{proposition}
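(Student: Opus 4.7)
The plan is to combine the explicit formula for $T$ on the $\zeta$-basis established immediately above the proposition with the assumed i.i.d.\ structure of $\chi$.

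First, I would write an arbitrary element $\alpha \in R_q$ on the $\zeta$-basis as $\alpha = \sum_{i=0}^{m/2-1} c_i \zeta_m^i$ with $c_i \in \FF_q$, and apply $\frac{k}{m}T$ using linearity together with the formula
\[
\frac{k}{m}T(\zeta_m^i) = \begin{cases} 0 & i \not\equiv 0 \pmod{m/k}, \\ \zeta_m^i & i \equiv 0 \pmod{m/k}, \end{cases}
\]
which is exactly what was proved in the displayed computation just before the proposition. Reindexing via $i = j(m/k)$ and using $\zeta_m^{m/k} = \zeta_k$, this yields
\[
\tfrac{k}{m} T(\alpha) \;=\; \sum_{j=0}^{k/2-1} c_{j(m/k)} \, \zeta_k^j,
\]
which is manifestly an element of $S_q$ expressed in its $\zeta$-basis $1,\zeta_k,\ldots,\zeta_k^{k/2-1}$. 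This verifies the first claim that $\frac{k}{m}T$ lands in $S_q$.

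Second, I would observe that the map sending the coefficient vector $(c_0,\ldots,c_{m/2-1})$ of $\alpha$ to the coefficient vector $(c_0, c_{m/k}, c_{2m/k}, \ldots, c_{(k/2-1)(m/k)})$ of $\frac{k}{m}T(\alpha)$ is simply a coordinate projection onto a subset of the coefficients. When $\alpha$ is drawn from $\chi$, its coefficients are i.i.d.\ according to $\chi_0$ by hypothesis; any subset of i.i.d.\ random variables remains i.i.d.\ with the same marginal distribution. Hence the coefficients of $\frac{k}{m}T(\chi)$ on the $\zeta$-basis of $S_q$ are i.i.d.\ $\chi_0$, which is precisely the statement that $\frac{k}{m}T(\chi)$ is the error distribution formed on the $\zeta$-basis of $S_q$ with coefficients from $\chi_0$.

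There is essentially no obstacle here: the substantive content was the trace computation carried out just above the proposition, which already identifies which basis vectors survive and with what scalar. The proof is purely a matter of assembling linearity, the explicit formula, and the elementary fact that restricting an i.i.d.\ vector to a coordinate subset preserves the i.i.d.\ structure. The only mild subtlety worth flagging explicitly is the reindexing $\zeta_m^{j(m/k)} = \zeta_k^j$ so that one recognizes the output as being written in the $\zeta$-basis of $S_q$ rather than of $R_q$.
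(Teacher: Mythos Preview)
Your proposal is correct and follows exactly the approach the paper intends: the paper declares the proposition ``now immediate'' from the preceding trace computation, and you have simply written out the details (linearity plus the explicit formula plus the observation that projecting an i.i.d.\ coefficient vector to a subset of coordinates preserves the i.i.d.\ $\chi_0$ structure). There is no substantive difference.
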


 The efficacy of the trace map with respect to the error distribution is due to its being an orthogonal projection to the space spanned by a subset of an orthonormal basis.

 \section{Reducing to a smaller ring}
 \label{sec:reduction}

 We demonstrate that if one can find sufficiently many samples whose $a$ values are restricted to a fixed multiplicative coset of a subring, then we can reduce the Ring-LWE problem to multiple independent Ring-LWE instances in the subring, without error inflation.

 For this section, we are in the two-power cyclotomic case.  Let $R$ be the ring of $m$-th cyclotomic integers, where $m$ is a power of two (which have dimension $n$, where $m = 2n$), and $S$ be the ring of $k$-th cyclotomic integers, where $k \mid m$.  Then we have an extension of rings, $S \subseteq R$ of degree $m/k$.  Suppose that the rational prime $q$ is unramified in $R$.

 \begin{proposition}
   \label{prop:reduction}
   Consider a Ring-LWE instance in $R_q$ with secret $s$ and error distribution $\chi$.  Let $a_0 \in R_q$ be a fixed invertible element.  
   Let $T := \operatorname{Tr}^{\Rq}_{\Sq}$, and suppose that $T(a_0)$ is invertible.

   Let $i$ be an integer. Then in time polynomial in $n$ and $\log q$, one can reduce a Ring-LWE sample from distribution $A_{a_0\Sq,s,\chi}$ to a Ring-LWE sample in $\Sq$ drawn according to secret
   \[
     \frac{T(a_0 \zeta^i s)}{T(a_0)}
   \]
   and error distribution $\frac{k}{m}T(\zeta^i \chi) \subseteq \Sq$.
 \end{proposition}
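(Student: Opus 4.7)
My plan is to chain together two operations from Section \ref{sec:advantages}: a rotation by $\zeta^i$ to pull the factor into the secret and error, followed by the trace-map reduction from Proposition \ref{prop:new-trace-hom}, rescaled at the end by the factor $k/m$ dictated by Proposition \ref{prop:new-error-trace}.

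First I would perform the rotation. Starting from $(a,b)$ with $b = as + e$ and $e \sim \chi$, the pair $(a, \zeta^i b)$ can be rewritten as $(a,\, a(\zeta^i s) + \zeta^i e)$, which is a sample drawn according to $A_{a_0 S_q,\, \zeta^i s,\, \zeta^i \chi}$. This is just the coset analogue of the rotation proposition, and importantly it does not require $\chi$ to be invariant under multiplication by $\zeta$ — I am simply relabeling, moving $\zeta^i$ into the secret and the error distribution via $R_q$-bilinearity of multiplication.

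Next I would apply the trace map and rescale, mimicking the proof of Proposition \ref{prop:new-trace-hom}. Writing $a = a_0 a'$ with $a'$ uniform in $S_q$ and using $S_q$-linearity of $T$, I set
\[
  A := \tfrac{k}{m} T(a) = \tfrac{k}{m} a'\, T(a_0), \qquad
  B := \tfrac{k}{m} T(\zeta^i b) = \tfrac{k}{m}\bigl(a'\, T(a_0 \zeta^i s) + T(\zeta^i e)\bigr).
\]
Rearranging gives
\[
  B = A \cdot \frac{T(a_0 \zeta^i s)}{T(a_0)} + \tfrac{k}{m}\, T(\zeta^i e),
\]
which is exactly the required Ring-LWE relation in $S_q$, with secret $T(a_0 \zeta^i s)/T(a_0)$ and error distributed as $\frac{k}{m} T(\zeta^i \chi)$. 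That this error lies in $S_q$ and is the advertised distribution is precisely the content of Proposition \ref{prop:new-error-trace}.

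The one detail that needs care is that $A$ is uniformly distributed in $S_q$. Since $a'$ is uniform in $S_q$, this reduces to showing that $\tfrac{k}{m} T(a_0)$ is a unit in $S_q$. The hypothesis supplies invertibility of $T(a_0)$, and $k/m = 1/2^j$ for some $j \ge 0$ is a unit in $\FF_q \subseteq S_q$ because $q$ is odd. Everything else is arithmetic in $R_q$ and $S_q$, so the reduction runs in time polynomial in $n$ and $\log q$. The only real obstacle is threading the scalar $k/m$ correctly so that the output error lands cleanly in $S_q$, and this is handled by Proposition \ref{prop:new-error-trace}; the rest is bookkeeping.
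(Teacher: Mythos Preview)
Your proof is correct and follows essentially the same route as the paper: rotate the second coordinate by $\zeta^i$, then apply $\frac{k}{m}T$ and invoke Proposition \ref{prop:new-trace-hom} to identify the new secret. You are in fact slightly more careful than the paper in explicitly verifying that $\frac{k}{m}T(a_0)$ is a unit (so that the first coordinate is uniform in $S_q$), but otherwise the arguments coincide.
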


 In particular, by Proposition \ref{prop:new-error-trace}, coefficient distributions of a $\zeta$-invariant $\chi$ and its resulting distribution $\frac{k}{m}T(\zeta^i \chi)$ are of the same size; it is in this sense that the errors do not inflate.

 \begin{proof}
   Consider the sample $(a,b)$ where $b= as+e$.  Multiplying the second coordinate of the sample by $\zeta^{i}$ and taking the trace $\frac{k}{m}T$, we obtain as in Proposition \ref{prop:new-trace-hom}, a sample
   \begin{align*}
     &\left( \frac{k}{m}T(a), \frac{k}{m}T(\zeta^i b) \right)  \\
     &\left( \frac{k}{m}T(a), \frac{k}{m}T(a\zeta^{i}s+\zeta^{i}e) \right)  \\
     &= \left(a'\frac{k}{m}T(a_0), a'\frac{k}{m}T(a_0)\cdot \left( \frac{T(a_0\zeta^{i}s)}{T(a_0)} \right) +\frac{k}{m}T(\zeta^{i}e) \right),
   \end{align*}
   where $a' := aa_0^{-1} \in S_q$.  
   
   Multiplication in the ring, and taking the trace, are polynomial in the ring size.
 \end{proof}

 The following is the main theorem of the paper.

 \begin{theorem}
   \label{thm:reduction}
Suppose $R$ is the ring of $m$-th cyclotomic integers, for $m = 2n$ a power of two, and $S$ is the ring of $k$-th cyclotomic integers, where $k \mid m$, so the extension $S \subseteq R$ is of degree $m/k$.  Suppose that the rational prime $q$ is unramified in $R$.

   Consider a Ring-LWE instance in $R_q$ with secret $s$ and error distribution $\chi$ which is invariant under multiplication by $\zeta = \zeta_m$, a primitive $m$-th root of unity.  Let $a_0 \in R_q$ be a fixed invertible element.  
   Let $T := \operatorname{Tr}^{\Rq}_{\Sq}$ as defined in Section \ref{sec:trace}, and suppose that $T(a_0)$ is invertible.

   Suppose one obtains $N$ samples $(a,b)$ distributed according to $A_{a_0\Sq,s,\chi}$ (notation from Section \ref{sec:mulcoset}).  

   Then in time linear in the number of samples $N$, and polynomial in $n$ and $\log q$, one can reduce the computation of the secret $s \in \Rq$ to the solution of $m/k$ Search Ring-LWE problems in $\Sq$ with error distribution $\frac{k}{m}T(\chi)$, having $N$ samples each.  These $m/k$ problems are independent in the sense that setting up any one of them does not require having solved any other one.

   Furthermore, if $\chi$ is formed on the $\zeta$-basis from coefficient distribution $\chi_0$ on $\Fq$ (see Section \ref{sec:error} for definition), then so is $\frac{k}{m}T(\chi)$.
 \end{theorem}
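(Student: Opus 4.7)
The plan is to invoke Proposition \ref{prop:reduction} once for each index $i \in \{0, 1, \ldots, m/k - 1\}$, producing $m/k$ Ring-LWE instances in $S_q$ whose secrets, taken together, encode $a_0 s$ via the $S_q$-basis structure of $R_q$ from Section \ref{sec:ortho}.

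First I would set up the instances. For each $i$ and each of the $N$ input samples from $A_{a_0 S_q, s, \chi}$, apply the per-sample reduction of Proposition \ref{prop:reduction} to produce a sample for a Ring-LWE instance in $S_q$ with secret $s_i' := T(a_0 \zeta^i s)/T(a_0)$ and error distribution $\tfrac{k}{m} T(\zeta^i \chi)$. Since $\chi$ is $\zeta$-invariant, $\zeta^i \chi = \chi$ for every $i$, so each of the $m/k$ instances carries the same error $\tfrac{k}{m} T(\chi)$. By Proposition \ref{prop:new-error-trace}, whenever $\chi$ is formed on the $\zeta$-basis with coefficient distribution $\chi_0$, so is $\tfrac{k}{m}T(\chi)$, settling the last claim of the theorem. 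The construction of each of the $m/k$ sample pools depends only on the raw inputs, not on the solution of any other instance, which is the required independence; the per-sample cost is polynomial in $n$ and $\log q$, giving an overall cost linear in $N$ as claimed.

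Assume the $m/k$ instances are then solved, returning $s_0', s_1', \ldots, s_{m/k-1}'$. Multiplying through by $T(a_0)$ gives quantities $\tau_i := T(a_0 \zeta^i s) \in S_q$. Using the $S_q$-basis $1, \zeta, \zeta^2, \ldots, \zeta^{m/k-1}$ of $R_q$ from Section \ref{sec:ortho}, write $a_0 s = \sum_{j=0}^{m/k-1} c_j \zeta^j$ with $c_j \in S_q$; then
\[
\tau_i \;=\; \sum_{j=0}^{m/k-1} c_j \, T(\zeta^{i+j}).
\]
By the trace formula in Section \ref{sec:ortho}, $T(\zeta^{i+j}) = 0$ unless $i + j \equiv 0 \pmod{m/k}$, in which case it equals $(m/k)\zeta^{i+j} \in S_q$. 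For each $i$ exactly one $j \in \{0, \ldots, m/k-1\}$ satisfies this congruence, so $\tau_i$ determines a single coefficient $c_j$ up to a fixed invertible multiplier (the scalar $m/k$ is invertible modulo $q$ since $q$ is odd, and $\zeta^{i+j}$ is a unit). Running through all $i$ recovers every $c_j$, hence $a_0 s$, and finally $s = a_0^{-1}(a_0 s)$, in time polynomial in $n$ and $\log q$.

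The main conceptual work is this reconstruction step; everything else is bookkeeping on top of Propositions \ref{prop:reduction} and \ref{prop:new-error-trace}. The key insight is that the sparsity of $T$ on the $\zeta$-basis means the $m/k$ quantities $\tau_i$ are, up to reindexing and a single scalar factor, exactly the $S_q$-coordinates of $a_0 s$ in the $S_q$-basis $\{1, \zeta, \ldots, \zeta^{m/k-1}\}$. Thus recovering $s$ from the subring solutions amounts to inverting a diagonal linear system over $S_q$, which is immediate; the only point that needs to be checked carefully is that the combination of $\zeta$-invariance of $\chi$ with Proposition \ref{prop:new-error-trace} really keeps all $m/k$ subring error distributions identical and $\chi_0$-based.
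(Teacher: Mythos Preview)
Your proof is correct and follows the same overall strategy as the paper: invoke Proposition~\ref{prop:reduction} once for each $i=0,\ldots,m/k-1$, solve the resulting subring instances, and reconstruct $s$ from the secrets $T(a_0\zeta^i s)/T(a_0)$ via a linear system over $S_q$; the error claim is handled identically via $\zeta$-invariance and Proposition~\ref{prop:new-error-trace}.

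The one notable difference is in the reconstruction step. The paper parameterizes the unknowns as the $S_q$-coordinates of $s$ itself, argues abstractly that the resulting $(m/k)\times(m/k)$ system is nonsingular (because $\{a_0\zeta^j\}$ is an $S_q$-basis of $R_q$), and then invokes Gaussian elimination. You instead solve for the $S_q$-coordinates of $a_0 s$ and use the explicit trace values $T(\zeta^{i+j})$ from Section~\ref{sec:ortho} to observe that the system is in fact diagonal (up to reindexing and the invertible scalar $m/k$), so each $\tau_i$ directly reads off one coordinate $c_j$; you then recover $s$ by multiplying by $a_0^{-1}$. This is a mild sharpening of the paper's argument --- it replaces an appeal to Gaussian elimination with an explicit inversion --- but does not change the complexity or the logical structure of the proof.
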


 \begin{proof}
   Set $i = j$ in Proposition \ref{prop:reduction} for each $j$ in the range $j=0,\ldots,m/k-1$, to obtain $N$ samples having secret
   \[
     c_j:=\frac{T(a_0 \zeta^{j}s)}{T(a_0)}.
   \]
   Using an oracle that solves Search Ring-LWE in $\Sq$, obtain $c_j$.

   Collecting all the values $c_j$, we have a linear system of $m/k$ equations over $\Sq$, whose indeterminates are the coefficients of $s$ (expressed in terms of a basis for $\Rq$ over $\Sq$), of the form
   \[
     T(a_0 \zeta^{j}s) = c_{j}T(a_0), \quad j=0, \ldots, m/k-1.
   \]
   The linear equations are independent provided that $\{ a_0\zeta^{j} \}$ is a set of $\Sq$-independent vectors in $\Rq$.  We saw above that $\{ \zeta^{j} \}_{j=0,\ldots,m/k-1}$ is a basis for $\Rq$ over $\Sq$.  Thus independence is guaranteed by the fact that $a_0$ is invertible.   Note that we can consider this system to consist of $n$ independent linear equations over $\Fq$. The system can be solved by Gaussian elimination to recover $s$.

  All the field operations concerned are polynomial in the size of the ring.  We must apply the trace to $N$ samples $m/k$ times, and we must carry out Gaussian elimination of dimension $n = m/2$ over $\Fq$, which is polynomial in $m$ and $\log q$.
 \end{proof}

 As a small corollary, note that in any small Ring-LWE situation where exhaustive search may apply, it is equally possible to use the above for a square-root speedup, provided many samples are available.  As an example, if we have a coefficient distribution with support not including all of $\Fq$, then the following statement demonstrates the approach.

 \begin{corollary}
   \label{corollary:square-root}
   Consider a Ring-LWE instance in $\Rq$ with secret $s$ and error distribution $\chi$ formed on a $\zeta$-basis with coefficient distribution having support strictly smaller than $\Fq$.

   There is an algorithm to solve this problem, with success probability $1/2$, in time and number of samples $q^{n/2}$ times factors polynomial in $n\log q$, using space polynomial in $n\log q$.
 \end{corollary}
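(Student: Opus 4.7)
The plan is to apply Theorem~\ref{thm:reduction} with $S$ taken to be the $(m/2)$-th cyclotomic integers, so that $[R:S] = 2$ and $\dim_{\Fq} \Sq = n/2$; I will choose $a_0 = 1$, for which $T(a_0) = 2$ is invertible in $\Sq$ since $q$ is odd. The recipe has three stages: filter raw samples to obtain samples of the restricted form; invoke Theorem~\ref{thm:reduction} to reduce to two subring Ring-LWE instances in $\Sq$; solve each one by exhaustive search over its smaller secret space, then recombine.

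For filtering, I would draw samples from $A_{s,\chi}$ and retain those whose first coordinate lies in $\Sq$. A uniform $a \in \Rq$ lands in $\Sq$ with probability $|\Sq|/|\Rq| = q^{-n/2}$, so by a Chernoff bound, drawing $q^{n/2} N$ raw samples yields at least $N/2$ filtered samples with failure probability exponentially small in $N$, for $N$ polynomial in $n\log q$. Conditioned on the filter, these are distributed as $A_{\Sq,s,\chi}$. Theorem~\ref{thm:reduction} then produces two independent Search Ring-LWE instances in $\Sq$ with subring secrets $c_j = T(\zeta^j s)/2$ for $j = 0, 1$, and, by its last assertion, error distributions formed on the $\zeta$-basis of $\Sq$ from the very same coefficient distribution $\chi_0$.

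For each subring instance I would then run an exhaustive search: iterate through the $q^{n/2}$ candidates $s' \in \Sq$, and for each test whether $b - a s'$ has all $n/2$ $\zeta$-basis coefficients in $E_{\chi_0}$ for every filtered sample. Since $|E_{\chi_0}|/q < 1$, a choice $N = O(\log q / \log(q/|E_{\chi_0}|))$ of filtered samples is enough to eliminate all $q^{n/2} - 1$ wrong candidates with probability at least $3/4$ per subproblem; a union bound over the two subproblems then gives overall success probability $\geq 1/2$. Once $c_0$ and $c_1$ are known, I would recover $s \in \Rq$ by Gaussian elimination on the $2\times 2$ $\Sq$-linear system $T(\zeta^j s) = 2 c_j$, exactly as in the proof of Theorem~\ref{thm:reduction}.

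The overall runtime and sample count come out to $q^{n/2}$ times polynomial factors in $n\log q$, and the space remains polynomial because the exhaustive search can be streamed one candidate at a time. The main obstacle I anticipate is the quantitative sample analysis of the exhaustive search: the strict inclusion $E_{\chi_0} \subsetneq \Fq$ is precisely the hypothesis that makes $\log(q/|E_{\chi_0}|)$ nonzero, and in the Ring-LWE regime where $|E_{\chi_0}|/q$ is bounded away from $1$, this quantity is $\Omega(1)$, yielding the promised $N = O(\log q)$ and hence polynomial overhead.
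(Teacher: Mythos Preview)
Your proposal is correct and follows essentially the same route as the paper: filter samples to $\Sq$ (the index-two cyclotomic subring), apply Theorem~\ref{thm:reduction} with $a_0=1$ to obtain two Ring-LWE instances in $\Sq$ with the same coefficient distribution $\chi_0$, solve each by exhaustive search over $q^{n/2}$ candidates, and recombine. If anything, your write-up is slightly more explicit than the paper's in the quantitative bookkeeping (the Chernoff bound for filtering, the union bound over candidates, and the caveat that $N$ being polynomial in $n\log q$ relies on $|E_{\chi_0}|/q$ being bounded away from~$1$).
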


 \begin{proof}
   Note that the hypotheses guarantee $\chi$ is invariant under multiplication by $\zeta$.  Let $\Sq$ be the ring of index two in $\Rq$ (i.e.\ $n$-th roots of unity).
   Collect samples, discarding all but those with $a \in \Sq$.  In time $O(N q^{n/2})$ we can accumulate $N$ samples with $a \in \Sq$.  Apply Theorem \ref{thm:reduction} to reduce to two Ring-LWE problems in $\Sq$ with $N$ samples each.  The error distribution $\chi$ on $\Rq$ gives an error distribution $\frac{1}{2}T^{\Rq}_{\Sq}(\chi)$ on $\Sq$.  If $\chi$ is formed on a $\zeta$-basis with coefficients supported in $E_\chi \subsetneq \Fq$, then $\frac{1}{2}T^{\Rq}_{\Sq}(\chi)$ is formed on a $\zeta$-basis with coefficients supported in $E_\chi \subsetneq \Fq$.  Therefore, if the number of samples is sufficient, the reduced Ring-LWE problems are solvable using exhaustive search through possible $s$ values.

   In our case, we need $N$ large enough so that a Ring-LWE problem in $\Sq$ with $N$ samples has a unique solutions with probability $1/\sqrt{2}$.  Although $N$ depends upon $|E_\chi|$, for the worst case $|E_\chi| = q-1$, $N$ is still polynomial in $n\log q$.  Solve the reduced problems by exhaustive search, which takes time $O(q^{n/2})$ and each succeeds with probability $1/\sqrt{2}$.
 \end{proof}

  \section{Background on the Blum-Kalai-Wasserman algorithm}
  \label{sec:bkwback}

  First, we will give a very brief overview of the Blum-Kalai-Wasserman (BKW) algorithm in the context of LWE \cite{bkw}.  It is a combinatorial algorithm in which samples are collected and stored so as to facilitate the creation of new samples, as iterated sums and differences of established ones.  The goal is to create new samples for which $a$ is restricted to a linear subspace.  This is the \emph{reduction phase} of the full BKW algorithm.  
  
  In BKW, after reduction, there is a \emph{hypothesis testing phase}, in which one solves a lower-dimensional Ring-LWE problem (that given by restricting $a$ to the subspace) by exhaustive search over possible secrets.  And then there is a \emph{back-substitution phase}, where the small piece of the secret recovered in hypothesis testing is used to rework the problem to prepare the next small piece for hypothesis testing.

  One can think of BKW as a sort of controlled Gaussian elimination on a matrix whose rows are samples, in which one wants to obtain as much simplification as possible using just one sum or difference of rows.  By keeping the coefficients of the linear combinations small, we prevent the error `blow-up' that occurs with regular Gaussian elimination.  The cost is in needing many more matrix rows (samples) in order to be able to choose good linear combinations.  The back-substitution phase is analogous to the eponymous phase of Gaussian elimination, with the recovered portion of the secret taking the role of the free variable.  From another point of view, BKW reduction is a sort of iterated birthday attack, in which one searches for and exploits collisions which eliminate entries of the vectors, reducing to a subspace, where one searches again for collisions, and so on.  
  
  Now let us be more precise.  During the reduction phase, only the $a$-value of a sample matters, considered as a vector in a vector space $V$, and the goal is to create samples with $a \in W$, a linear subspace of $V$.  Suppose, for the sake of explanation, that $W$ is defined by the first $r$ coefficients of its vectors being $0$.  One generates an ordered list of the first $r$ entries of all the vectors $a$ which are observed.  Whenever a new vector $a$ is observed, it is compared to the ordered list.  If it is not already present, it is added.  Otherwise, we have discovered two samples $(a,b)$ and $(a',b')$ for which $(a-a',b-b')$ is a new sample for which $a-a'$ lies in $W$.  The penalty is that the error distribution of these new samples is widened.  We begin a new table of such vectors as they are generated.  In this way, we produce a large number of samples in a smaller subspace at the cost of inflating the error widths.

  Instead of performing this reduction all at once, one chooses an appropriate \emph{block size} $\beta$ for BKW (which is fixed throughout in the na\"ive implementation), which is to say, the codimension of $W$ as a subspace of $V$.  Once we have produced enough samples in $W$, we can use these to perform another BKW reduction to a subspace $W' \subseteq W$ of codimension $\beta$ in $W$.  The cost of a reduction step is exponential in $\beta$, so we keep $\beta$ as small as possible.  We perform block reductions until the samples are all taken from a small enough subspace to run an exhaustive search or other strategy to finish off the problem.  The limiting factor on shrinking $\beta$ is an upper limit on the number of blocks used overall.  Each reduction into codimension $\beta$ has a cost in error-inflation.  We have a limit on the total error inflation (because hypothesis testing will fail if the error is so inflated as to appear uniform), which limits the total number of blocks.

  The BKW algorithm has been improved in recent years, including using coding theory to reduce the number of values that need to be stored and compared, sieving at each step, allowing the block size to vary, using the Fourier transform to speed up hypothesis testing; see \cite{albrecht-bkw} \cite{better} \cite{sieve-bkw} \cite{asympt} \cite{coded-bkw} \cite{improved-bkw}.

  \section{Reduction using BKW}
  \label{sec:reduction-phase}

  In this section, we address the problem of finding sufficiently many samples $(a,b)$ having $a$ from an appropriate subring $\Sq \subseteq \Rq$, so that Theorem \ref{thm:reduction} will apply.  For this, we use the reduction phase of the BKW algorithm.  We emphasize that it is possible, once the samples have been given in an appropriate basis, to use an off-the-shelf BKW reduction algorithm, including coded BKW with sieving etc., for the reduction phase.  The window size may be chosen at will, for example, and need not depend upon the ring structure.  Then, Theorem \ref{thm:reduction}, which is polynomial time, replaces all the other phases of BKW.

  The only adaptor necessary to connect BKW to Theorem \ref{thm:reduction} is an attention to the basis used.  In order to perform the reduction, we begin with the $\zeta$-basis of $\Rq$ over $\Fq$, namely 
 \[
   1, \zeta, \zeta^2, \ldots, \zeta^{n-1},
 \]
 and then reorder it to produce a \emph{prioritized basis}. The most important property we desire for our purposes is that if one of $\zeta^i$ and $\zeta^j$ has lower multiplicative order than the other, then it comes later than the other.
 One computationally convenient way to accomplish this is to take the bit-reversal permutation on $n$ elements (i.e.\ $a$ maps to $b$ if the binary representation of $a$ in $\log_2(n)$ bits, read backwards, is $b$), then reserve the order.
 For concreteness, the prioritized basis (in part) is as follows:
 \[
   \zeta_m^{n-1}, 
   \zeta_m^{\frac{n}{2}-1},
   \zeta_m^{\frac{3n}{4}-1},
   \zeta_m^{\frac{n}{4}-1},
   \ldots,
   \zeta_m^{\frac{3n}{4}},
   \zeta_m^{\frac{n}{4}},
   \zeta_m^{\frac{n}{2}},
   1.
 \]

 Using any type of BKW reduction, one now reduces, with respect to this basis.  To be precise, one seeks to eliminate the earlier coefficients of the elements $a$, as expressed in this basis.  At the end, at most the last $2^k$ coefficients are non-zero, for some small $k$.  For example, one may reduce until only the last $1$, $2$, $4$ or $8$ coefficients are possibly non-zero.  
 
 The varying block sizes during the reduction algorithm itself need not respect any restrictions, and improvements such as coded-BKW with sieving, may be used.  For example, coded-BKW, under the assumption the secret $s$ is small, associates to each $a$ a codeword $c$ from a linear code.  Then the sample $(a,as+e)$ is replaced with $(c,as+e)$, which is a valid sample with a larger error, before it is fed to the BKW tables.  The tables then have fewer rows because their rows are chosen from codewords.  In sieving, imagine that one has stored the original $a$ along with each new sample $(c,as+e)$.  The difference between $a$ and $c$ measures the error inflation introduced by coding.  A collision between $(c,a_1s+e)$ and $(c,a_2s+e)$ being passed to another table has an $a = a_1-a_2$ that is not actually $0$ in the first few entries, only small.  Among the vectors being fed from one table to the next, one can pause to sieve them, creating vectors whose $a$'s are somewhat smaller.  This reduces the error inflation introduced by the coding process.

The important thing is that, whatever technique is used, after reduction, one has obtained samples with $a \in \Sq = S/qS$ for some $S$ of dimension $2^k$.  One then applies Theorem \ref{thm:reduction}.

 \section{The Ring-BKW algorithm}
 \label{sec:ring-bkw-alg}

 In this section we summarize the Ring-BKW algorithm for completeness.  In short, one uses an off-the-shelf BKW reduction algorithm on samples with respect to a particular choice of basis, then applies Theorem \ref{thm:reduction}.  The important point is that the back-substitution phase of BKW is no longer needed, and the hypothesis-testing phase can be parallelized.  The hypothesis-testing phase can also be off-the-shelf, including recent improvements using the Fourier transform etc. \cite{better}.  However, we will elaborate somewhat.

\subsection*{Ring-BKW algorithm}

Choose a subring $S \subseteq R$ of dimension $B$ over $\ZZ$ (corresponding to a lower-degree $2$-power cyclotomic field), to which we wish to reduce.  Define $\Rq$ and $\Sq$ as before.  The Ring-BKW Algorithm is given as Algorithm \ref{alg}.

\begin{algorithm}
  \caption{Ring-BKW Algorithm}
  \label{alg}
 \begin{enumerate}
   \item Run BKW Reduction
   (as in Section \ref{sec:reduction-phase} above with prioritized basis) on the values $a$ until all samples $(a,b)$ have $a \in \Sq$.  
   \item Use Theorem \ref{thm:reduction} to create samples from $n/B$ different Ring-LWE problems in $\Sq$.  
   \item Solve these Ring-LWE problems using any method of choice.
   \item Use Theorem \ref{thm:reduction} to recover the secret $s$ in polynomial time from these solutions.
 \end{enumerate}
\end{algorithm}

   The ring structure is not relevant in step (1); one uses BKW reduction as for any LWE problem (in particular, the window size can be chosen without regard to the ring structure).  In fact, any reduction algorithm to obtain values $a \in \Sq$ will do as well.  
   
   The following theorem relates any reduction algorithm to the solution of Search Ring-LWE.  For the following, we consider Gaussian error with a well-defined width; an \emph{expansion factor} refers to a multiplicative factor on the width.

 \begin{theorem}
   Suppose that $\mathcal{B}$ is an algorithm which, given a Ring-LWE problem of dimension $n$ over $\Fq$, produces $N$ Ring-LWE samples of dimension $B$ with error expansion factor of $f$, in time $t_\mathcal{B}(n,B,f,N)$, and using $r_\mathcal{B}(n,B,f,N)$ original samples.
   
   Suppose that $\mathcal{R}$ is an algorithm which solves Ring-LWE in dimension $B$ over $\Fq$ in time $t_\mathcal{R}(B)$, given error width less than or equal to $w$ and at least $N$ samples.

   Then, there is an algorithm $\mathcal{A}$ which solves Ring-LWE in $\Rq$ having width $\sigma$ in time 
   \[
     t_\mathcal{B}(n,B,w/\sigma,N) + \frac{n}{B}t_\mathcal{R}(B) + N \cdot (\mbox{time polynomial in $n \log q$}),
   \]
   using $r_\mathcal{B}(n,B,w/\sigma,N)$ samples.
 \end{theorem}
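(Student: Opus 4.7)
The plan is to simply chain the reduction algorithm $\mathcal{B}$, Theorem~\ref{thm:reduction}, and the solver $\mathcal{R}$, and then tally the costs. Given a Ring-LWE instance in $\Rq$ of Gaussian width $\sigma$, I would first invoke $\mathcal{B}$ with expansion parameter $f = w/\sigma$. By hypothesis, this consumes $r_\mathcal{B}(n,B,w/\sigma,N)$ original samples, takes time $t_\mathcal{B}(n,B,w/\sigma,N)$, and returns $N$ Ring-LWE samples of dimension $B$, i.e.\ samples $(a,b)$ with $a \in \Sq$ and error Gaussian of width at most $\sigma \cdot (w/\sigma) = w$.

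Next I would apply Theorem~\ref{thm:reduction} with $a_0 = 1$; this is legal because $1 \in \Rq^*$ and $T(1) = n/B$ is a power of two, hence a unit modulo the odd prime $q$. The theorem converts the batch of $N$ subring samples into $N$ samples for each of $n/B$ independent Ring-LWE problems in $\Sq$, with secrets $c_j = T(\zeta^j s)/T(1)$ for $j = 0,\ldots,n/B-1$. By Proposition~\ref{prop:new-error-trace}, the trace map $\tfrac{B}{n}T$ preserves the Gaussian coefficient width, so each subproblem has error width at most $w$, and $\mathcal{R}$ applies. Solving all $n/B$ subproblems (in sequence, or trivially in parallel) takes total time $\tfrac{n}{B}\, t_\mathcal{R}(B)$ and yields all the $c_j$.

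Finally, I recover $s$ from the linear system $T(\zeta^j s) = c_j\, T(1)$, $j = 0,\ldots,n/B-1$, viewed as $n$ equations over $\Fq$ in the coefficients of $s$ in the $\zeta$-basis of $\Rq$ over $\Sq$; as in the proof of Theorem~\ref{thm:reduction}, independence follows from $a_0 = 1$ being invertible, and Gaussian elimination finishes in time polynomial in $n \log q$. The per-sample setup costs (the $n/B$ trace applications in step two and the construction of the linear system) contribute $N \cdot \mathrm{poly}(n \log q)$ overall. Summing the three phases reproduces the claimed runtime, and the only original samples consumed are those fed to $\mathcal{B}$.

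The only delicate point to verify is that ``Ring-LWE samples of dimension $B$'' produced by $\mathcal{B}$ genuinely satisfies the hypothesis of Theorem~\ref{thm:reduction}, namely $a \in a_0\Sq$ for an invertible $a_0$ with $T(a_0)$ invertible. This is built into the BKW-style reduction of Section~\ref{sec:reduction-phase}, which targets precisely the subspace $\Sq$ in the prioritized basis with $a_0 = 1$, so no further argument is needed and I foresee no serious obstacle beyond this interface check.
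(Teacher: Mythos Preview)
Your proposal is correct and follows essentially the same route as the paper: set $f = w/\sigma$, run $\mathcal{B}$ to land in $\Sq$, invoke Theorem~\ref{thm:reduction} to spawn the $n/B$ subproblems, solve each with $\mathcal{R}$, and reassemble $s$ via the linear system. Your write-up is in fact more explicit than the paper's own proof (which simply points to Algorithm~\ref{alg} and tallies the three costs), in that you pin down $a_0 = 1$, verify $T(1) = n/B$ is a unit mod the odd prime $q$, and flag the interface check that $\mathcal{B}$'s output really lands in $\Sq$.
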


 \begin{proof}
  We will use Algorithm \ref{alg}.  
  We will set $f = w/\sigma$.
  The time to run the reduction phase is $t_\mathcal{B}(n,B,w/\sigma,N)$.  The time to create the smaller Ring-LWE problems is linear in $N$ and polynomial in $n\log q$ from Theorem \ref{thm:reduction}.  Solving the $\frac{n}{B}$ smaller Ring-LWE problems (guaranteed to succeed by the choice of $f$) takes time $t_\mathcal{R}(B)$ each.  Then reconstructing the secret (as in Theorem \ref{thm:reduction}) again takes polynomial time.
\end{proof}

 \section{Advanced Keying}
 \label{sec:advanced-keying}

 In the previous section, one uses BKW on LWE to perform reduction, say with block size $B$.  Given a Ring-LWE sample, there are in fact $n$ rotated samples one could feed into the reduction:
 \[
   (a,b), (\zeta a, \zeta b), \ldots, (\zeta^{n-1} a, \zeta^{n-1} b).
 \]
 Na\"ively, one may include them all, or include the first one.  Probably the best course of action is to include them all, to increase the number of collisions located amongst the available samples (since the number of samples needed is the downside to BKW in general).  By including all rotations, one catches all collisions of the form $a_1 \pm \zeta^i a_2$ for some $i$.  These are all perfectly useful collisions for the algorithm, if the error term is $\zeta$-invariant.  In this section we propose a space-saving approach based on symmetries, which is equivalent, in terms of collisions obtained per sample, to storing all rotations of the samples.  (If one chooses to compare to running BKW without rotating samples at all, i.e.\ ring-blind, it will both reduce storage \emph{and} require fewer samples.)

 In the discussion that follows, the reduction algorithm described in Section \ref{sec:bkwback} will be called \emph{traditional BKW reduction} to distinguish it from the \emph{advanced keying BKW reduction} proposed in this section.  There are a variety of modern speedups and alternatives (such as coded-BKW and sieving) which could also be combined with advanced keying, but for purposes of clarity we will ignore these until later in this section.  In particular, in traditional BKW reduction, when a collision is recorded, nothing is added to the current table, but the difference is passed to the next table.  (Later, it will prove helpful to call this \emph{one-difference} and compare it to \emph{all-differences} where new samples are stored as well as passed on, to increase the number of collisions.)

Our proposal in this section is an analogue of the space-saving technique used in traditional BKW, wherein for each sample $(a,b)$ we may derive two samples $(a,b)$ and $(-a,-b)$:  we choose one canonically (where the first non-zero coefficient of $a$ is in $\left\{1,\ldots,\frac{q-1}{2}\right\}$, say), and save only this one.  By doing so, we will catch all collisions between samples where their sum \emph{or their difference} vanishes, and save half the table rows in the process.  More precisely, the number of rows of the table for each block never exceeds $(q^B-1)/2$, since the possible non-zero vectors come in pairs of which we store at most one.
 Furthermore, this is also a time efficiency issue.  If instead one simply included $(a,b)$ and $(-a,-b)$ among the incoming samples, then without this trick, the collisions $a_1 + a_2$ and $-a_1 - a_2$ are both sent on to the next table, both are multiplied by $-1$ thereafter, and we actually end up with repeat samples that must be weeded out at a later stage.  For reference, traditional BKW reduction, with this space-saving technique, is given explicitly in Algorithm \ref{alg2}.

The fundamental observation is that the prioritized basis proposed in the last section is particularly well-suited to this type of strategy, because of the resulting `negacyclic permutation' effect of multiplication by $\zeta$.  It results in a savings of $1/2B$ instead of $1/2$ and is completely analogous to the trick above in both space and efficiency savings.  It requires that the block size $B$ be a power of $2$.

Write $\mathbf{a} \in \Fq^n$ for the vector of coefficients of $a$ in the prioritized basis.  
The action of $\zeta^h$ (taking $a$ to $\zeta^h a$) on such a vector permutes the entries, and swaps the sign on some of them (since $\zeta^n = -1$).  Suppose $h$ is exactly divisible by $2^\ell$ (i.e. $\ord_2(h) = \ell$).  With regards to the permutation only (ignoring the signs), the permutation has the property that it stabilizes each consecutive block of length $n/2^\ell$ throughout (that is, it permutes each block individually).  For fixed $\ell$, there are exactly $n/2^\ell$ such integers $h$ (note that $h$ is taken modulo $2^n$, for $h=2^n$ results in the identity permutation).  The following consequence is key:

\begin{property}
  \label{prop:zero}
  Let $B \mid n$ denote block size.  Then applying $\zeta^{n/B}$ preserves the property that $\mathbf{a}$ has first block (or series of any number of first blocks) consisting of zero entries.
\end{property}

This property will allow us to rotate samples by any of the $B$ quantities $1, \zeta^{n/B}, \zeta^{2n/B}, \ldots, \zeta^{(B-1)n/B}$ during BKW reduction with block size $B$. 

Next, one must specify a \emph{canonical choice of representative} from the set of possible rotations $\{a, \zeta^{n/B}a, \ldots, \zeta^{(B-1)n/B}a\}$, depending only on the first non-zero block of entries, up to an overall sign.  A possible canonical choice is the ordering which has smallest first entry (in absolute value), together with some tie-breaking conventions, e.g.\ smallest second entry, etc., and if all entries are equal in absolute value, then some appropriate convention on sign changes between $\mathbf{a}$ and $|\mathbf{a}|$, etc.  However, any ordering of the possible length-$B$ vectors modulo overall sign, will do.  It is not possible to break a tie if the first $B$ entries of the two rotations actually \emph{agree} up to overall sign under one of the rotations.  However, in this case we have found a ``self-match,'' meaning that two of the rotations have a difference which has all zero in the block under consideration, and so at most one of the two rotations need be stored, and the difference is sent to the following block, as with any collision, as in a traditional BKW algorithm.

The advanced keying BKW reduction is given in Algorithm \ref{alg2}, and for comparison purposes, the traditional BKW reduction using all rotations of each sample is given in Algorithm \ref{alg3}.

\begin{algorithm}
  \caption{Traditional BKW Reduction Phase}
  \label{alg3}
  \begin{algorithmic}[1]
    \STATE Create empty Tables $1$ through $n/B$.

    \FOR{each initially available sample $(a,b)$}
    	\FOR{$j=0$ \TO $n-1$}
       		\STATE Rotate the sample by $\zeta^j$, to obtain $(a_1,b_1)$.
		\STATE Send the sample $(a_1,b_1)$ to Table $1$.
	\ENDFOR
    \ENDFOR

     \FOR{each sample $(a,b)$ sent to Table $i$, $i < n/B$}
     \IF{$a$ has all $0$ entries in block $i$}
     \STATE{send sample $(a,b)$ on to Table $i+1$}
     \ENDIF
   \STATE Multiply by $-1$ if necessary to ensure the first non-zero coefficient of $a_1$ is in the range $1$ to $(q+1)/2$.
   \IF{a collision is found (i.e.\ a sample $(a_0,b_0)$ already exists in the table having the same first $i$ blocks of size $B$)}
       \STATE Subtract $(a_1,b_1)$ from $(a_0,b_0)$ to obtain a new sample whose first $i$ blocks of size $B$ are zero
       \STATE Send the result to Table $i+1$.
   \ELSE
       \STATE Store the associated sample in Table $i$.
     \ENDIF
       \ENDFOR
     \end{algorithmic}
\end{algorithm}

\begin{algorithm}
  \caption{Advanced Keying BKW Reduction Phase}
  \label{alg2}
  \begin{algorithmic}[1]
    \STATE Create empty Tables $1$ through $n/B$.

    \FOR{each initially available sample $(a,b)$}
    	\FOR{$j=0$ \TO $n/B-1$}
       		\STATE Rotate the sample by $\zeta^j$, to obtain $(a_1,b_1)$.
  		\STATE Send the sample $(a_1,b_1)$ to Table $1$.
   	\ENDFOR
     \ENDFOR

     \FOR{each sample $(a,b)$ sent to Table $i$, $i < n/B$}
     \IF{$a$ has all $0$ entries in block $i$}
     \STATE{send sample $(a,b)$ on to Table $i+1$}
     \ENDIF
   \STATE From $a, \zeta^{n/B}a, \ldots, \zeta^{(B-1)n/B}a$, choose a canonical representative.
   \FOR{every sample $(a_1,b_1)$ corresponding to a canonical representative}
   \STATE Multiply by $-1$ if necessary to ensure the first non-zero coefficient of $a_2$ is in the range $1$ to $(q+1)/2$.
   \IF{a collision is found (i.e.\ a sample $(a_0,b_0)$ already exists in the table having the same first $i$ blocks of size $B$)}
       \STATE Subtract $(a_1,b_1)$ from $(a_0,b_0)$ to obtain a new sample whose first $i$ blocks of size $B$ are zero
       \STATE Send the result to Table $i+1$.
   \ELSE
       \STATE Store the associated sample in Table $i$.
     \ENDIF
       \ENDFOR
       \ENDFOR
     \end{algorithmic}
\end{algorithm}

Correctness of Algorithm \ref{alg2} is a consequence of Property \ref{prop:zero}.  Furthermore, Algorithms \ref{alg3} and \ref{alg2} catch the same collisions in the following heuristic sense.  For each collision $\zeta^i a_1 - \zeta^j a_2$, there will be another collision at $\zeta^{i+k} a_1 - \zeta^{j+k} a_2$ for any $k \equiv 0 \pmod{n/B}$.  In Algorithm \ref{alg3}, all $B$ of these collisions are passed on to the next table after storing $B$ new rows in the current table.  But any one of the samples sent on can generate the others via rotation, so only one of them is actually needed at the next table.  In Algorithm \ref{alg2}, only one of them is stored and only one is sent onward (but only one is needed).  However, there is some difference in the final output because we are only keeping one sample per row, and the order of input samples to a given table may differ, resulting in a different table entry.  If one uses the \emph{all-differences} variation, this difference disappears and the output of the two algorithms will be the same.

The following is immediate from Algorithm \ref{alg2}.

\begin{proposition}
  Each table in Algorithm \ref{alg2} has at most $\frac{q^B-1}{2B}$ rows in total.
\end{proposition}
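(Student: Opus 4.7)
My plan is to reduce this to an orbit-counting problem. Each row in Table $i$ is indexed by the canonical form of block $i$ of the stored sample's $a$-vector, where the canonicalization in Algorithm \ref{alg2} (lines 14--16) identifies $a$'s that differ by one of the $B$ rotations $\zeta^{jn/B}$ ($j=0,\ldots,B-1$) or by sign. By the usual BKW collision logic, at most one sample is stored per canonical block-$i$ value, so it suffices to upper-bound the number of orbits of this action on $\Fq^B \setminus \{0\}$.

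First I would identify the group precisely. By Property \ref{prop:zero}, multiplication by $\zeta^{n/B}$ preserves the block structure, inducing a linear map $T_i : \Fq^B \to \Fq^B$ on block $i$. Since $\zeta^n = -1$, we have $T_i^B = -I$, so $T_i$ has order exactly $2B$, and in particular the sign $-I = T_i^B$ already lies in $\langle T_i \rangle$. Hence the full canonicalization symmetry group is just the cyclic group $\langle T_i \rangle$ of order $2B$.

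The key technical step will be to show that $\langle T_i \rangle$ acts \emph{freely} on $\Fq^B \setminus \{0\}$. My argument: suppose $T_i^j v = v$ for some $v \ne 0$ and $0 < j < 2B$, and set $k := \gcd(j, 2B)$; then $T_i^k v = v$ as well, with $0 < k < 2B$ and $k \mid 2B$. Because $B$ is a power of two, so is $2B$, and every proper divisor of $2B$ divides $B$; thus $k \mid B$. Then $-v = T_i^B v = (T_i^k)^{B/k} v = v$, so $2v = 0$; since $q$ is odd, this forces $v = 0$, a contradiction.

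Every orbit on $\Fq^B \setminus \{0\}$ therefore has exactly $2B$ elements, so the number of orbits is precisely $(q^B - 1)/(2B)$ (an integer, since $\varphi(2B) = B$ guarantees $q^B \equiv 1 \pmod{2B}$), which yields the claimed upper bound on table rows. The main obstacle I anticipate is the freeness step, since a priori $T_i$ could have eigenvectors with small-order eigenvalues; but it falls out cleanly from the identity $T_i^B = -I$ once one exploits the power-of-two structure of $B$.
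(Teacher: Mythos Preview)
Your argument is correct. The paper itself gives no proof beyond the phrase ``immediate from Algorithm~\ref{alg2},'' so your orbit-counting argument is a genuine elaboration of what the paper leaves implicit. The identification of the canonicalization group as the cyclic group $\langle T_i\rangle$ of order $2B$ (with $-I = T_i^B$ already contained in it) is exactly right, and your freeness argument via $T_i^B = -I$ together with the fact that every proper divisor of the $2$-power $2B$ divides $B$ is clean and complete.

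One remark worth making: your freeness proof actually yields slightly more than the proposition. The surrounding text in the paper discusses the possibility of a ``self-match,'' namely two rotations $\zeta^{jn/B}a$ and $\zeta^{j'n/B}a$ whose block-$i$ entries agree up to sign, and describes how the algorithm would handle such a case. Your argument shows this never happens when block~$i$ is nonzero: a self-match would mean $T_i^{j-j'}v = \pm v$ with $v\neq 0$, and either sign reduces (using $T_i^B=-I$) to a nontrivial $T_i^k$ fixing $v$, which you have excluded. So the number of orbits is \emph{exactly} $(q^B-1)/(2B)$, not merely at most, and the self-match branch of the algorithm is in fact never exercised. This is consistent with the paper's bound but sharpens its informal discussion.
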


Finally, we will remark again that BKW reduction improvements for LWE, such as coded-BKW and sieving, may also be adapted to use the advanced keying demonstrated here, \emph{provided} block sizes can be maintained to be powers of $2$ (varying them is ok).  As some modern algorithms vary block size, this may be an impediment.  The na\"{i}ve way to do this would be to code samples first, then choose a canonical rotation of each codeword.  Perhaps better, one could also code each rotation and choose the one with smallest error, which may introduce a significant improvement to the error inflation, depending on the choice of code.  (Note that, for those familiar with coded-BKW, the notion of advanced keying is not so different than coding, as it provides a sort of 'codeword' for each sample, without an error inflation.)

Algorithms \ref{alg3} and \ref{alg2}, as well as a completely ring-blind version of BKW reduction were coded in Python in Sage Mathematics Software for comparison purposes.  Some example results are given in Table \ref{table:adv}.  In short, the advanced keying did reduce table sizes and samples needed as described, and had a faster overall runtime.  A few remarks are in order:

\begin{enumerate}
  \item The experiments were chosen to represent a range of small parameter sets, where timings were in the range of seconds or minutes on a Lenovo X1 laptop.
  \item After parameters were chosen, the number of samples was chosen to be a round number where the final table began to have a few samples on average; the timing therefore roughly represents the time until the final table begins to populate.
  \item To compare meaningfully, the ring-blind algorithm uses $n$ times as many initial samples, which is equal to the total number of rotations of incoming samples for the other algorithms.  The fact that the final table is populated but not full in all cases is evidence that the number of samples needed by Algorithms \ref{alg2} and \ref{alg3} is $1/n$ of those needed na\"{i}vely.
  \item For some smaller parameter sets, we also tested a version of the algorithm (labelled AD = `All Differences') in which every sample encountered is stored (so each row of the table can contain multiple samples) and every difference is passed on (i.e. the new sample is compared to everything already in its row).  The purpose of this is to demonstrate that the advanced keying will still find the same number of samples.  However, the AD version is significantly slower in all cases, so it was only implemented for some of the smaller parameter sets in the table.
  \item Algorithms \ref{alg3} and \ref{alg2} are pseudocode; the implementation necessarily addressed details not covered in the pseudocode presentation.  For example, some moderate attention was given to efficiency in the rotation of samples.  For example, when only certain coefficients of the rotation were needed, only those were computed.
\end{enumerate}

Some experimental observations:

\begin{enumerate}
  \item The table sizes observed in Algorithm \ref{alg2} are very close to $1/B$ of the number observed in Algorithm \ref{alg3}, as expected.
  \item The faster runtime of Algorithm \ref{alg2} is a result of the fact that fewer samples are handled ($1/B$ as many are fed to the first table compared to Algorithm \ref{alg3}), although they must be handled in more detail, so the speedup is less than a $1/B$ factor.
  \item Algorithms \ref{alg3} and \ref{alg2} use the exact same starting data, and it is reassuring that the reduced sample counts are similar, and the same in the AD version. 
  \item Algorithm \ref{alg3} tends to find more samples than Algorithm \ref{alg2}.  The difference is in which matches are found when more than two samples collide in a row in the table, and therefore is more pronounced as the number of rows grows.  
\end{enumerate}

\begin{table}
  \label{table:adv}
  \begin{tabular}{l|l|l|l}
    $n=2^3$, $B=2^2$, $q=211$ & Ring-blind & Algorithm \ref{alg3} & Algorithm \ref{alg2} \\
    \hline
    \hline
    Initial Samples & $4000 \cdot 2^3$ & $4000$ & $4000$ \\
    OD Table Size & $31999$ & $31996$ & $7999$ \\
    OD Reduced Samples & $1$ & $1$ & $1$ \\
    OD Runtime & $1.43$ s & $1.80$ s & $2.42$ s \\
    AD Table Size & $31999$ & $31996$ & $7999$ \\
    AD Reduced Samples & $1$ & $1$ & $1$ \\
    AD Runtime & $1.73$ s & $1.92$ s & $2.46$ s \\
	\hline
	\hline
    $n=2^4$, $B=2^2$, $q=17$ & Ring-blind & Algorithm \ref{alg3} & Algorithm \ref{alg2} \\
    \hline
    \hline
    Initial Samples & $2000 \cdot 2^4$ & $2000$ & $2000$ \\
    OD Table Size & $31985$ & $31988$ & $7997$ \\
    OD Reduced Samples & $15$ & $3$ & $3$ \\
    OD Runtime & $6.13$ s & $6.24$ s & $4.69$ s \\
    AD Table Size & $36448$ & $36623$ & $9163$ \\
    AD Reduced Samples & $81$ & $21$ & $21$ \\
    AD Runtime & $8.43$ s & $9.73$ s & $6.27$ s \\
    \hline
    \hline
    $n=2^5$, $B=2^2$, $q=7$ & Ring-blind & Algorithm \ref{alg3} & Algorithm \ref{alg2} \\
    \hline
    \hline
    Samples & $200\cdot 2^5$ & $200$ & $200$ \\
    OD Table Size & $6368$ & $6386$ & $1596$ \\
    OD Reduced Samples & $31$ & $13$ & $4$ \\
    OD Runtime & $7.39$ s & $8.07$ s & $4.23$ s \\
    \hline
    \hline
    $n=2^6$, $B=2^3$, $q=3$ & Ring-blind & Algorithm \ref{alg3} & Algorithm \ref{alg2} \\
    \hline
    \hline
    Samples & $250\cdot 2^6$ & $250$ & $250$ \\
    OD Table Size & $15988$ & $15993$ & $1998$ \\
    OD Reduced Samples & $12$ & $7$ & $2$ \\
    OD Runtime & $27.0$ s & $29.0$ s & $10.2$ s \\
    \hline

  \end{tabular}
  \caption{\small The term ``Ring-blind'' refers to Algorithm \ref{alg3} but with $j=0$ to $0$ in line 3, i.e. without rotating any initial samples.  A fixed list of samples was generated pseudorandomly for each experiment; `Initial Samples' refers to how many were used from the beginning of the list.  `Reduced Samples' refers to the number of samples eventually contained in the last table.  `Runtime' refers to the wall time as measured in Sage Mathematics Software.  `Table Size' refers to the total number of rows stored not counting the final samples.  `OD' (One Difference) refers to the algorithms as presented in the paper.  `AD' (All Differences) refers to a modification in which every sample that matches a row is also stored in that row, and when a match is found, the differences with everything in the row are passed on.}
\end{table}

\section{In practice}
\label{sec:relevance}

It is evident that the runtime of Ring-BKW is expected to be better than that of standard BKW (in any of its current forms), since the reduction and hypothesis testing phases may be taken to be the same, but the backsubstitution phase is no longer required.  Furthermore, the smaller Ring-LWE problems of hypothesis testing can be solved in parallel.  

Albrecht et al. computed the runtime for BKW \cite{albrecht-bkw}.  This work has been rendered out of date by many of the modern speedups mentioned in the introduction, but it is likely safe to say a few things that still hold true about modern BKW runtimes.  First, the reduction phase is the dominant cost.  Second, however, the backsubstitution phase differs from the reduction phase by a polynomial factor, so eliminating it can be expected to give a polynomial factor speeedup.  

Advanced keying also offers a visible benefit when compared to a ring-blind implementation of BKW.  For, compared to a ring-blind implementation, table sizes are reduced to $1/B$ of their former size and the number of samples used is reduced to approximately $1/n$ as many.  Each sample must be treated rather more carefully however:  it is rotated and a canonical choice made.  However, experiments still indicate increasing runtime gains with dimension, even against traditional BKW with every sample rotated before beginning.  Nevertheless, advanced keying requires block sizes to be a power of $2$, and therefore may or may not be useful or extendable in view of the changing block sizes sometimes employed in BKW reduction.

The Ring-LWE Challenges \cite{challenges} are in the form of \emph{Tweaked Ring-LWE}, which refers to dual Ring-LWE transfered to the unital version (see \cite[\S 2.3]{challenges}), so that the parameter assumptions in this paper apply to the two-power cyclotomic challenges included therein.  It would be very interesting to test these algorithms on those parameters, but it is beyond the scope of this paper.

\bibliographystyle{splncs}
\bibliography{ring-bkw}

\end{document}